\newcommand{\rev}[1]{{#1}^R}
\newcommand{\cnt}{\mathrm{cnt}}
\newtheorem{theorem}{Theorem}
\newtheorem{lemma}{Lemma}
  \newtheorem{problem}{Problem}
  \newtheorem{example}{Example}
\title{A hardness result and new algorithm for the longest common palindromic subsequence problem}
\date{}
\author[$\dagger$]{Shunsuke Inenaga}
\author[$\ddagger$]{Heikki Hyyr\"o}
\affil[$\dagger$]{Department of Informatics, Kyushu University, Japan\\
\texttt{inenaga@inf.kyushu-u.ac.jp}}
\affil[$\ddagger$]{School of Information Sciences, University of Tampere, Finland\\
\texttt{heikki.hyyro@uta.fi}}
\begin{document}

\maketitle

\begin{abstract}
  The \emph{$2$-LCPS problem}, first introduced by Chowdhury et al.
  [Fundam. Inform., 129(4):329--340, 2014], asks one to compute
  (the length of) a longest palindromic common subsequence
  between two given strings $A$ and $B$.
  We show that the $2$-LCPS problem is at least as hard as
  the well-studied longest common subsequence problem for 4 strings.
  Then, we present a new algorithm which solves the $2$-LCPS problem
  in $O\big(\sigma M^2 + n \big)$ time,
  where $n$ denotes the length of $A$ and $B$,
  $M$ denotes the number of matching positions between $A$ and $B$,
  and $\sigma$ denotes the number of distinct characters occurring
  in both $A$ and $B$.
  Our new algorithm is faster than Chowdhury et al.'s sparse algorithm
  when $\sigma = o(\log^2n \log\log n)$.
\end{abstract}


\section{Introduction}

Given $k \geq 2$ string, the \emph{longest common subsequence problem} for $k$ strings
(\emph{$k$-LCS problem} for short) asks to compute (the length of) a longest string that
appears as a subsequence in all the $k$ strings. Whilst the problem is known to be NP-hard
for arbitrary many strings~\cite{Maier78}, it can be solved in polynomial time for a constant
number of strings (namely, when $k$ is constant).

The $2$-LCS problem that concerns two strings is the most basic, but also the most widely studied and used,
form of longest common subsequence computation.
Indeed, the $2$-LCS problem and similar two-string variants are central topics
in theoretical computer science and have applications e.g. in computational biology,
spelling correction, optical character recognition and file versioning.
The fundamental solution to the $2$-LCS problem is based on dynamic programming~\cite{WagnerF74} and
takes $O(n^2)$ for two given strings of length $n$\footnote{For simplicity, we assume that input strings are
of equal length $n$. However, all algorithms mentioned and proposed in this paper are applicable for strings of different lengths.}.
Using the so-called ``Four Russians'' technique~\cite{Artazarov70},
one can solve the $2$-LCS problem for strings over a constant alphabet
in $O(n^2 / \log^2 n)$ time~\cite{MasekP80}.
For a non-constant alphabet,
the $2$-LCS problem can be solved in $O(n^2 \log\log n / \log^2 n)$ time~\cite{Grabowski16}.
Despite much effort, these have remained as the best known algorithms
to the $2$-LCS problem, and no strongly sub-quadratic time $2$-LCS algorithm
is known.
Moreover, the following conditional lower bound for the $2$-LCS problem
has been shown:
For any constant $\lambda > 0$,
an $O(n^{2-\lambda})$-time algorithm which solves the $2$-LCS problem
over an alphabet of size $7$
refutes the so-called strong exponential time hypothesis (SETH)~\cite{AbboudBW15}.

In many applications it is reasonable to incorporate additional constraints
to the LCS problem (see e.g.~\cite{ChinSFHK04,Arslan07,IliopoulosR08a,KucherovPZ11,Deorowicz12,FarhanaR12,ZhuW13,FarhanaR15,ZhuWW16,ZhuWW16a}).
Along this line of research, Chowdhury et al.~\cite{ChowdhuryHIR14} introduced
the \emph{longest common palindromic subsequence problem}
for two strings (\emph{$2$-LCPS problem} for short),
which asks one to compute (the length of) a longest common subsequence
between strings $A$ and $B$ with the additional constraint that the subsequence must be a palindrome.
The problem is equivalent to finding (the length of) a longest palindrome
that appears as a subsequence in both strings $A$ and $B$,
and is motivated for biological sequence comparison~\cite{ChowdhuryHIR14}.
Chowdhury et al. presented two algorithms for solving the $2$-LCPS problem. The first is a conventional dynamic programming algorithm that runs in $O(n^4)$ time and space. The
second uses sparse dynamic programming and runs in $O(M^2 \log^2 n \log \log n + n)$ time and $O(M^2)$ space\footnote{The original time bound claimed in~\cite{ChowdhuryHIR14} is $O(M^2 \log^2 n \log \log n)$, since they assume that the matching position pairs are already computed. For given strings $A$ and $B$ of length $n$ each over an integer alphabet of polynomial size in $n$, we can compute all matching position pairs of $A$ and $B$ in $O(M+n)$ time.},
where $M$ is the number of matching position pairs between $A$ and $B$.

The contribution of this paper is two-folds:
Firstly, we show a tight connection between
the $2$-LCPS problem and the $4$-LCS problem
by giving a simple linear-time reduction from the $4$-LCS problem
to the $2$-LCPS problem.
This means that the $2$-LCPS problem is at least as hard as the $4$-LCS problem,
and thus achieving a significant improvement on
the $2$-LCPS problem implies a breakthrough on the well-studied
$4$-LCS problem, to which all existing solutions~\cite{Itoga81,HsuD84,IrvingF92,HakataI92,WangKS11} require at least $O(n^4)$ time in the worst case.
Secondly, we propose a new algorithm for the $2$-LCPS problem which 
runs in $O(\sigma M^2 + n)$ time and uses $O(M^2 + n)$ space,
where $\sigma$ denotes the number of distinct characters occurring in
both $A$ and $B$.
We remark that our new algorithm is faster than Chowdhury et al.'s
sparse algorithm with $O(M^2 \log^2 n \log \log n + n)$ running time~\cite{ChowdhuryHIR14} when $\sigma = o(\log^2 n \log\log n)$.


\section{Preliminaries} \label{sec:prelim}

\subsection{Strings}

Let $\Sigma$ be an \emph{alphabet}.
An element of $\Sigma$ is called a \emph{character}
and that of $\Sigma^*$ is called a \emph{string}.
For any string $A = a_1 a_2 \cdots a_n$ of length $n$, 
$|A|$ denotes its length, that is, $|A| = n$. 

For any string $A = a_1 \cdots a_m$,
let $\rev{A}$ denote the reverse string of $A$,
namely, $\rev{A} = a_m \cdots a_1$.
A string $P$ is said to be a \emph{palindrome}
iff $P$ reads the same forward and backward, namely,
$P = \rev{P}$.

A string $S$ is said to be a \emph{subsequence} of another string $A$
iff there exist increasing positions $1 \leq i_1 < \cdots < i_{|S|} \leq |A|$
in $A$ such that $S = a_{i_1} \cdots a_{i_{|S|}}$.
In other words, $S$ is a subsequence of $A$ iff
$S$ can be obtained by removing zero or more characters from $A$.

A string $S$ is said to be a \emph{common subsequence}
of $k$ strings ($k \geq 2$) iff
$S$ is a subsequence of all the $k$ strings.
$S$ is said to be a \emph{longest common subsequence} (\emph{LCS})
of the $k$ strings iff other common subsequences of the $k$ strings
are not longer than $S$.
The problem of computing (the length of) an LCS of $k$ strings
is called the \emph{$k$-LCS problem}.

A string $P$ is said to be a \emph{common palindromic subsequence}
of $k$ strings ($k \geq 2$) iff
$P$ is a palindrome and is a subsequence of all these $k$ strings.
$P$ is said to be a \emph{longest common palindromic subsequence} (\emph{LCPS})
of the $k$ strings iff other common palindromic subsequences of
the $k$ strings are not longer than $P$.

In this paper, we consider the following problem:
\begin{problem}[The $2$-LCPS problem]
  Given two strings $A$ and $B$,
  compute (the length of) an LCPS of $A$ and $B$.
\end{problem}

For two strings $A = a_1 \cdots a_n$ and $B = b_1 \cdots b_n$,
an ordered pair $(i, j)$ with $1 \leq i, j \leq n$
is said to be a \emph{matching position pair}
between $A$ and $B$ iff $a_i = b_j$.
Let $M$ be the number of matching position pairs between $A$ and $B$.
We can compute all the matching position pairs in $O(n + M)$ time
for strings $A$ and $B$ over integer alphabets of polynomial size in $n$.


\section{Reduction from $4$-LCS to $2$-LCPS}

In this section, we show that the $2$-LCPS problem
is at least as hard as the $4$-LCS problem.

\begin{theorem}
  The $4$-LCS problem can be reduced to the $2$-LCPS problem
  in linear time.
\end{theorem}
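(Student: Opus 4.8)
The plan is to reduce an arbitrary instance $S_1,S_2,S_3,S_4$ of the $4$-LCS problem to a single instance $(A,B)$ of the $2$-LCPS problem, exploiting the fact that a palindrome is obtained by gluing a string to its own reverse. Throughout, write $U \preceq V$ to mean that $U$ is a subsequence of $V$. Let $\#$ be a fresh character occurring in none of the $S_i$, let $N = |S_1|+|S_2|+|S_3|+|S_4|$, and let $K = N+1$. I would set
\[
  A = S_1\, \#^K\, \rev{S_2}, \qquad B = S_3\, \#^K\, \rev{S_4}.
\]
Both strings are built in linear time and have length $O(N)$. The claim to establish is that the length of an LCPS of $A$ and $B$ equals $2L + K$, where $L$ is the length of an LCS of $S_1,S_2,S_3,S_4$; the $4$-LCS length is then recovered in constant time as $(\mathrm{LCPS}(A,B) - K)/2$.

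The intended correspondence is that a long common palindromic subsequence ``is'' a string $Y$ together with its reverse $\rev{Y}$, glued around a central run of $\#$'s. First I would prove the easy inequality $\mathrm{LCPS}(A,B) \ge 2L + K$: taking $X$ to be an LCS of the four strings, the string $X\, \#^K\, \rev{X}$ is a palindrome; it is a subsequence of $A$ because $X \preceq S_1$ and $\rev{X} \preceq \rev{S_2}$ (the latter from $X \preceq S_2$), and symmetrically it is a subsequence of $B$. Its length is $2|X| + K = 2L + K$.

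For the reverse inequality I would first prove a structural lemma: every palindromic subsequence $P$ of $A$ that contains at least one $\#$ has the form $P = Y\, \#^j\, \rev{Y}$ with $Y$ free of $\#$ and $1 \le j \le K$. The key observation is that the $\#$'s of $A$ form one contiguous block, so in any embedding of $P$ into $A$ no non-$\#$ character of $P$ can lie strictly between two $\#$-characters of $P$: such a character would be forced to map strictly inside the solid $\#$-block, a contradiction. Hence $P = Y\, \#^j\, Z$ with $Y,Z$ free of $\#$, and the palindrome condition forces $Z = \rev{Y}$. Embedding this form into $A$ yields $Y \preceq S_1$ and $Y \preceq S_2$, and the same argument for $B$ yields $Y \preceq S_3$ and $Y \preceq S_4$; thus any common palindromic subsequence of this shape has $|Y| \le L$ and length $2|Y| + j \le 2L + K$.

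The main obstacle, and the reason for padding with a long block rather than a single separator, is the remaining case of palindromes that use \emph{no} $\#$: such a palindrome may ``spill across'' the separator and need not split at its center, so it is not governed by the structural lemma. (Indeed, a single separator genuinely fails here, as one can build four strings with tiny $L$ yet a long boundary-spanning palindromic common subsequence.) I would dispose of this case by a crude length bound: a common subsequence of $A$ and $B$ using no $\#$ has length at most $|S_1| + |S_2| \le N < K$, strictly smaller than the value $2L + K$ already achieved in the easy direction. Consequently the optimum is attained by a palindrome containing $\#$, every such palindrome has length at most $2L + K$, and this bound is tight; therefore $\mathrm{LCPS}(A,B) = 2L + K$. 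Since the construction and the final arithmetic are linear in $N$, this establishes the desired linear-time reduction.
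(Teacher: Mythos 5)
Your proposal is correct and is essentially the same reduction as the paper's: the paper sets $X = \rev{A}ZB$ and $Y = \rev{C}ZD$ with $Z = \$^{2n+1}$ a block of fresh separators longer than the non-separator content, which matches your $S_1\,\#^K\,\rev{S_2}$ and $S_3\,\#^K\,\rev{S_4}$ up to which string is reversed and the exact (equally sufficient) block length. The only difference is one of rigor, in your favor: the paper asserts in one line that the long block forces any LCPS to have the form $\rev{T}ZT$ with $T$ a common subsequence of all four strings, whereas you prove this splitting explicitly (the structural lemma plus the length bound disposing of $\#$-free palindromes), which fills in exactly the details the paper leaves implicit.
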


\begin{proof}
Let $A$, $B$, $C$, and $D$ be 4 input strings for the $4$-LCS problem.
We wish to compute an LCS of all these 4 strings.
For simplicity, assume $|A| = |B| = |C| = |D| = n$.
We construct two strings $X = \rev{A}ZB$ and $Y = \rev{C}ZD$
of length $4n+1$ each,
where $Z = \$^{2n+1}$
and $\$$ is a single character which does not appear in
$A$, $B$, $C$, or $D$.
Then, since $Z$ is a common palindromic subsequence of $X$ and $Y$,
and since $|Z| = 2n+1$ while $|A|+|B| = |C|+|D| = 2n$,
any LCPS of $X$ and $Y$ must be at least $2n+1$ long
containing $Z$ as a substring.
This implies that the alignment for any LCPS of $X$ and $Y$
is enforced so that the two $Z$'s in $X$ and $Y$ are fully aligned.
Since any LCPS of $X$ and $Y$ is a palindrome,
it must be of form $\rev{T}ZT$,
where $T$ is an LCS of $A$, $B$, $C$, and $D$.
Thus, we can solve the $4$-LCS problem by solving
the $2$-LCPS problem.
\end{proof}

\begin{example}  
  Consider 4 strings
  $A = \mathtt{aabbccc}$,
  $B = \mathtt{aabbcaa}$,
  $C = \mathtt{aaabccc}$,
  and $D = \mathtt{abcbbbb}$ of length 7 each.
  Then, an LCPS of $X = \mathtt{cccbbaa}\$^{15}\mathtt{aabbcaa}$
  and $Y = \mathtt{cccbaaa}\$^{15}\mathtt{abcbbbb}$
  is $\mathtt{cba}\$^{15}\mathtt{abc}$,
  which is obtained by e.g., the following alignment:
  \vspace*{-10mm}
  \begin{figure}[h!]
    \centerline{
      \includegraphics[scale=0.5]{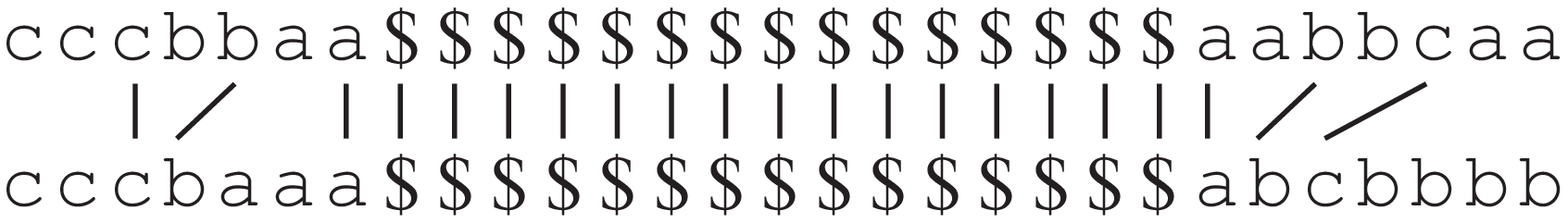}
    }
  \vspace*{3mm}
  \noindent Observe that $\mathtt{abc}$ is an LCS of
  $A$, $B$, $C$, and $D$.
  \end{figure}
\end{example}


\section{A new algorithm for $2$-LCPS}

In this section, we present a new algorithm for the $2$-LCPS problem.

\subsection{Finding rectangles with maximum nesting depth}

Our algorithm follows the approach used in the sparse dynamic programming algorithm by Chowdhury et al.~\cite{ChowdhuryHIR14}:
They showed that the $2$-LCPS problem can be reduced to a geometry problem
called the \emph{maximum depth nesting rectangle structures} problem
(\emph{MDNRS} problem for short),
defined as follows:
\begin{problem}[The MDNRS problem] \label{prob:MDNRS} \leavevmode \par
  \vspace*{2mm}
  \noindent \textbf{Input:} A set of integer points $(i, k)$ on a 
2D grid,  where each point is associated with a color $c \in \Sigma$.
  The color of a point $(i, k)$ is denoted by $c_{i, k}$.

  \vspace*{2mm}
  \noindent \textbf{Output:} A largest sorted list $L$ of pairs of points, such that
    \begin{enumerate}
     \item For any $\langle (i, k), (j, \ell) \rangle \in L$, $c_{i, j} = c_{j, \ell}$, and
     \item For any two adjacent elements $\langle (i, k), (j, \ell) \rangle$ and $\langle (i', k'), (j', \ell')$ in $L$, $i' > i$, $k' > k$, $j' < j$, and $\ell' < \ell$.
    \end{enumerate}
\end{problem}
Consider two points $(i, k)$, $(j, \ell)$ in the grid
such that $i < j$ and $k < \ell$ (see also Figure~\ref{fig:nested_rectangles}).
Imagine a rectangle defined by taking 
$(i, k)$ as its lower-left corner
and $(j, \ell)$ as its upper-right corner.
Clearly, this rectangle can be identified as the pair 
$\langle (i, k), (j, \ell) \rangle$ of points.
Now, suppose that $i$ and $k$ are positions of one input string $A = a_1 \cdots a_m$
and $j$ and $\ell$ are positions of the other input string $B = b_1 \cdots b_n$
for the $2$-LCPS problem.
Then, the first condition $c_{i, j} = c_{j, \ell}$
for any element in $L$ implies that $a_i = a_j = b_k = b_\ell$, namely,
$i, j, k, \ell$ are matching positions in $A$ and $B$.
Meanwhile, the second condition $i' > i$, $k' > k$, $j' < j$, and $\ell' < \ell$
implies that $i', j', k', \ell'$ are matching positions that are
``inside'' $i, j, k, \ell$.
Hence if we define the set of 2D points $(i, k)$ to consist of the set of matching
position pairs between $A$ and $B$ and then solve the MDNRS problem, the solution
list $L$ describes a set of rectangles with maximum nesting depth, and the characters
that correspond to the lower-left and upper-right corner matching position pairs define
an LCPS between the input strings $A$ and $B$. Recall that $M$ is the number of such pairs.
As here the lower-left and upper-right corners of each rectangle corresponding to matching position pairs,
the overall number of unique rectangles in this type of MDNRS problem is $O(M^2)$.

\begin{figure}[htb]
  \centerline{
    \includegraphics[scale=0.8]{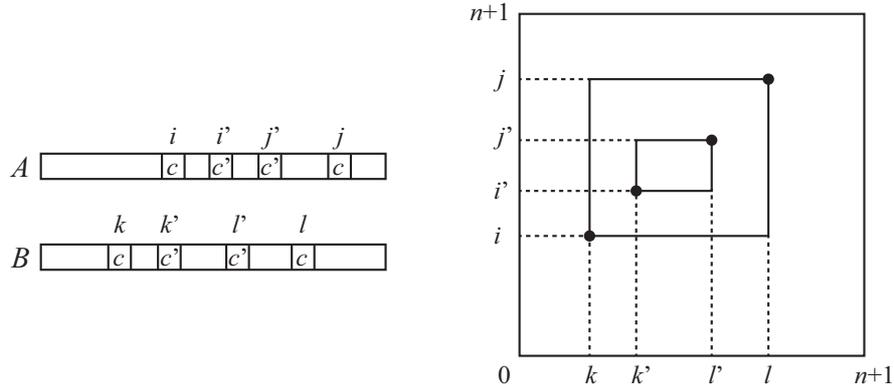}
  }
  \caption{Illustration for the relationship between
    the $2$-LCPS problem and the MDNRS problem.
    The two nesting rectangles defined by
    $\langle (i,k), (j, \ell) \rangle$ and
    $\langle (i',k'), (j', \ell') \rangle$ correspond to 
    a common palindromic subsequence $cc'c'c$ of $A$ and $B$,    
    where $c = c_{i,k} = c_{j,\ell}$ and $c' = c_{i',k'} = c_{j',\ell'}$.
  }
  \label{fig:nested_rectangles}
\end{figure}  

\subsection{Our new algorithm}

Consider the MDNRS over the set of 2D points $(i, k)$ defined by the matching
position pairs between $A$ and $B$, as described above.

The basic strategy of our algorithm is to process
from larger rectangles to smaller ones.
Given a rectangle $R = \langle (i, k), (j, \ell) \rangle$, we locate
for each character $c \in \Sigma$ a maximal sub-rectangle
$\langle (i', k'), (j', \ell') \rangle$ in $R$ that is associated to character $c$
(namely, $c_{i',k'} = c_{j',\ell'} = c$).
The following lemma is important:
\begin{lemma} \label{lem:key_lemma}
  For any character $c \in \Sigma$,
  its maximal sub-rectangle is unique (if it exists).
\end{lemma}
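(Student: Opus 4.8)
The plan is to describe the maximal sub-rectangle explicitly through the extreme occurrence positions of $c$ inside $R$, and then observe that these extremes, and hence the rectangle, are forced to be unique. Fix the rectangle $R = \langle (i,k), (j,\ell) \rangle$, and recall that a grid point $(p,q)$ (with $p$ a position of $A$ and $q$ a position of $B$) carries color $c$ precisely when $a_p = b_q = c$. Restricting to the open interior of $R$, I would collect the occurrences of $c$ along each axis,
\[ P = \{\, p : i < p < j,\ a_p = c \,\}, \qquad Q = \{\, q : k < q < \ell,\ b_q = c \,\}. \]
The decisive observation is that the points of color $c$ strictly inside $R$ are exactly the product grid $P \times Q$: for every $p \in P$ and $q \in Q$ we have $a_p = c = b_q$, so $(p,q)$ is automatically a matching position pair of color $c$, and conversely every such interior point arises in this way. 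This decoupling of the two coordinate axes is what I expect to be the heart of the argument.

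Given this product structure, a sub-rectangle of $R$ associated with $c$ is a pair $\langle (p,q), (r,s) \rangle$ with $(p,q), (r,s) \in P \times Q$, $p < r$, and $q < s$. I would propose as the maximal one the rectangle obtained by pushing each corner as far toward the boundary of $R$ as the occurrences of $c$ permit, namely
\[ \langle\, (\min P,\ \min Q),\ (\max P,\ \max Q) \,\rangle. \]
Next I would check that this candidate is itself a legitimate $c$-sub-rectangle whenever any exists: if some $c$-sub-rectangle exists then $|P| \ge 2$ and $|Q| \ge 2$, so $\min P < \max P$ and $\min Q < \max Q$, and by the product observation both corners are matching position pairs of color $c$ lying strictly inside $R$.

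It then remains to establish maximality and uniqueness together. For an arbitrary $c$-sub-rectangle $\langle (p,q), (r,s) \rangle$ of $R$, the memberships $p, r \in P$ and $q, s \in Q$ force $\min P \le p$, $\min Q \le q$, $r \le \max P$, and $s \le \max Q$; hence the candidate nests every $c$-sub-rectangle and is a maximum with respect to containment. A maximum of a partial order is automatically unique, and since $\min P$, $\max P$, $\min Q$, $\max Q$ are uniquely determined integers, the maximal sub-rectangle for $c$ is unique. The one point that I would be careful to pin down is the reading of ``maximal'': the claim is uniqueness of a single rectangle containing all others, so it is exactly the product structure of $P \times Q$ that must be invoked to rule out several incomparable inclusion-maximal rectangles and to guarantee that a genuine maximum exists.
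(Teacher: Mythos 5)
Your proof is correct, but it takes the opposite route from the paper's. The paper argues by contradiction: it supposes two distinct maximal sub-rectangles $\langle (i',k'),(j',\ell')\rangle$ and $\langle (i'',k''),(j'',\ell'')\rangle$ for $c$ and merges them coordinate-wise into the strictly larger sub-rectangle $\langle (i'',k'),(j',\ell'')\rangle$ containing both, contradicting maximality. That merge step is legitimate precisely because of the fact you isolate explicitly: an occurrence of $c$ in $A$ and an occurrence of $c$ in $B$ can be paired arbitrarily, i.e.\ the color-$c$ points strictly inside $R$ form the product grid $P \times Q$. You use this decoupling constructively, exhibiting the candidate $\langle (\min P, \min Q), (\max P, \max Q)\rangle$ and showing it nests every $c$-sub-rectangle, so that uniqueness follows from the uniqueness of a maximum in a containment order. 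Your version buys three things: it sidesteps the paper's ``w.l.o.g.'' step (which, as written, covers only one of several configurations of two incomparable rectangles, whereas your argument needs no case analysis); it proves the slightly stronger statement that a genuine maximum exists, rather than only that maximal elements cannot be distinct; and it identifies the maximal sub-rectangle explicitly as the one the algorithm actually computes via the two successor and two predecessor queries on $\mathcal{P}_{\hat{A},c}$ and $\mathcal{P}_{\hat{B},c}$, making the lemma and its later algorithmic use visibly the same object. The paper's contradiction argument is shorter, but it leaves the last two points implicit.
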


\begin{proof}
  Assume on the contrary that there are two distinct
  maximal sub-rectangles $\langle (i', k'), (j', \ell') \rangle$
  and $\langle (i'', k''), (j'', \ell'') \rangle$
  both of which are associated to character $c$.
  Assume w.o.l.g. that $i' > i''$, $k' < k''$, $j' < j''$ and $\ell'' > \ell'$.
  Then, there is a larger sub-rectangle 
  $\langle (i'', k'), (j', \ell'') \rangle$ of $R$
  which contains both of the above rectangles, a contradiction.
  Hence, for any character $c$, a maximal sub-rectangle in $R$
  is unique if it exists.
\end{proof}

Lemma~\ref{lem:key_lemma} permits us to define the following recursive algorithm
for the MDNRS problem:

We begin with the initial virtual
rectangle $\langle (0, 0), (n+1, n+1) \rangle$.
Suppose we are processing a rectangle $R$.
For each character $c \in \Sigma$,
we compute its maximal sub-rectangle $R_c$ in $R$
and recurse into $R_c$ until we meet one of the following conditions:
\begin{enumerate}
  \item[(1)] There remains only a single point in $R_c$,
  \item[(2)] There remains no point in $R_c$, or
  \item[(3)] $R_c$ is already processed.
\end{enumerate}
The recursion depth clearly corresponds to the rectangle nesting depth, and
we associate each $R$ with its maximum nesting depth $d_R$.
Whenever we meet a rectangle $R_c$ with Condition (3),
we do not recurse inside $R_c$ but simply return the
already-computed maximum nesting depth $d_{R_c}$.

Initially, every rectangle $R$ is marked non-processed,
and it gets marked processed as soon as the recursion for $R$ is finished 
and $R$ receives its maximum nesting depth. Each already processed rectangle
remains marked processed until the end of the algorithm.

\begin{theorem} \label{theo:linear-space-algo}
  Given two strings $A$ and $B$ of length $n$
  over an integer alphabet of polynomial size in $n$,
  we can solve the MDNRS problem (and hence the $2$-LCPS problem)
  in $O(\sigma M^2 + n)$ time and $O(M^2 + n)$ space, where $\sigma$ denotes
  the number of distinct characters occurring in both $A$ and $B$.
\end{theorem}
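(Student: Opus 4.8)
The plan is to verify that the memoized recursion described above is correct and then to show it runs within the stated bounds. For correctness I would first pin down $R_c$ explicitly: writing $R = \langle (i,k),(j,\ell)\rangle$, its maximal $c$-sub-rectangle has lower-left corner $(\Succ_A(i,c),\Succ_B(k,c))$ and upper-right corner $(\Pred_A(j,c),\Pred_B(\ell,c))$, where $\Succ_A(i,c)$ (resp.\ $\Pred_A(j,c)$) is the leftmost (resp.\ rightmost) occurrence of $c$ in $A$ strictly after $i$ (resp.\ strictly before $j$), and analogously for $B$. Because these are the extreme occurrences of $c$ between the sides of $R$, every $c$-sub-rectangle of $R$ is contained in $R_c$, which is the containment form of the uniqueness in Lemma~\ref{lem:key_lemma}. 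Combined with the monotonicity observation that the nesting depth attainable strictly inside a rectangle can only grow under containment, this shows that the best depth reachable from $R$ through a $c$-pair equals $1 + d_{R_c}$; hence $d_R = \max_c (1 + d_{R_c})$ with the base cases supplied by Conditions~(1)--(2), and Condition~(3) merely reusing memoized values. The MDNRS-to-$2$-LCPS correspondence established above then turns $d_R$ at the initial rectangle into the $2$-LCPS length, each nesting level contributing one symmetric character pair and a single surviving point an odd center.

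Next I would bound the cost. The corner formula shows that both corners of every $R_c$ are again matching position pairs, since each of their coordinates is an occurrence of $c$ in the respective string. Therefore every rectangle the recursion ever creates has both corners among the $M$ matching points (plus the virtual corners of the initial rectangle), so there are only $O(M^2)$ distinct rectangles. I would index the memoization table by the ordered pair of corner-point identifiers, giving an $O(M^2)$-space table with $O(1)$ access; memoization then guarantees that each rectangle is expanded exactly once, and the recursion depth, bounded by the nesting depth $O(M)$, can be flattened into an explicit stack.

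The step I expect to be the main obstacle is answering each per-character query for $R_c$ in $O(1)$ time without blowing the space budget: a successor/predecessor table indexed by all $n$ positions would use $\Theta(\sigma n)$ space, which can exceed $M^2$. I would instead index the precomputed information by the $M$ points rather than by raw positions, storing for every matching point $p$ and every character $c$ the two points $\mathrm{LL}[p][c] = (\Succ_A(p_x,c),\Succ_B(p_y,c))$ and $\mathrm{UR}[p][c] = (\Pred_A(p_x,c),\Pred_B(p_y,c))$, so that $R_c$ is read off in $O(1)$ by applying $\mathrm{LL}$ to the lower-left corner of $R$ and $\mathrm{UR}$ to its upper-right corner, followed by $O(1)$ non-degeneracy and containment checks. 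Each table has $M\sigma$ entries, and since every character common to $A$ and $B$ forces at least one matching pair we have $\sigma \le M$, whence $M\sigma = O(M^2)$. I would fill the tables by scanning $A$ right-to-left and left-to-right while maintaining a length-$\sigma$ array of current nearest occurrences (and symmetrically for $B$), and then translate each resulting coordinate pair into a point identifier through a radix/hash map of the $M$ matching pairs, available in $O(1)$ under the integer-alphabet-of-polynomial-size assumption; this preprocessing costs $O(\sigma M + n)$ time and space.

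Finally I would total the bounds: $O(M+n)$ to generate the matching pairs, $O(\sigma M + n) = O(M^2+n)$ to build the $\mathrm{LL}$/$\mathrm{UR}$ tables and the point map, and $O(\sigma M^2)$ for the recursion, which visits $O(M^2)$ rectangles at $O(\sigma)$ work each and uses the $O(M^2)$ memo table. Adding these yields $O(\sigma M^2 + n)$ time and $O(M^2 + n)$ space, and the ``hence the $2$-LCPS problem'' clause follows from the established equivalence.
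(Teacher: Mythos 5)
Your proposal is correct and takes essentially the same route as the paper: the same memoized recursion over the unique maximal sub-rectangles of Lemma~\ref{lem:key_lemma}, the same $O(M^2)$ bound on distinct rectangles, and the same charging argument ($O(M^2)$ expansions at $O(\sigma)$ work each, with $O(1)$-time predecessor/successor and memo lookups), yielding $O(\sigma M^2+n)$ time and $O(M^2+n)$ space. The only real difference is bookkeeping: the paper keeps its position-indexed predecessor/successor tables within the space budget by first shrinking the strings via alphabet reduction (so $\hat{m},\hat{n}\leq M$ and $\sigma(\hat{m}+\hat{n})=O(M^2)$), whereas you index by matching points using $\sigma\leq M$; note that your length-$\sigma$ arrays and your coordinate-to-identifier map still implicitly require that same $O(n)$ character-renaming step, and the identifier map is cleanest done deterministically via character-count indexing (as the paper's tables $T_c$ do) rather than by hashing.
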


\begin{proof}
  To efficiently perform the above recursive algorithm, 
  we conduct the following preprocessing
  (alphabet reduction) and construct the two following data structures.
  
  \noindent \textbf{Alphabet reduction:}
  First, we reduce the alphabet size as follows.
  We radix sort the original characters in $A$ and $B$,
  and replace each original character by its rank in the sorted order.
  Since the original integer alphabet is of polynomial size in $n$,
  the radix sort can be implemented with $O(1)$ number of bucket sorts,
  taking $O(n)$ total time.
  This way, we can treat $A$ and $B$ as strings over an alphabet $[1, 2n]$.
  Further, we remove all characters that occur only in $A$ from $A$,
  and remove all characters that occur only in $B$ from $B$.
  Let $\hat{A} = \hat{a}_1 \cdots \hat{a}_{\hat{m}}$ and
  $\hat{B} = \hat{b}_1 \cdots \hat{b}_{\hat{n}}$ be
  the resulting strings, respectively.
  It is clear that we can compute $\hat{A}$ and $\hat{B}$ in $O(n)$ time.
  The key property of the shrunk strings $\hat{A}$ and $\hat{B}$
  is that since all $M$ matching position pairs in the original strings
  $A$ and $B$ are essentially preserved in $\hat{A}$ and $\hat{B}$,
  it is enough to work on strings $\hat{A}$ and $\hat{B}$
  to solve the original problem.
  If $\sigma$ is the number of distinct characters
  occurring in \emph{both} $A$ and $B$,
  then $\hat{A}$ and $\hat{B}$ are strings over alphabet $[1, \sigma]$.
  It is clear that $\sigma \leq \min\{\hat{m}, \hat{n}\} \leq n$.

  \noindent \textbf{Data structure for finding next maximal sub-rectangles:}
  For each character $c \in [1, \sigma]$,
  let $\mathcal{P}_{\hat{A},c}$ and $\mathcal{P}_{\hat{B},c}$ be the set of positions of
  $\hat{A}$ and $\hat{B}$ which match $c$,
  namely, $\mathcal{P}_{\hat{A}, c} = \{i \mid a_i = c, 1 \leq i \leq \hat{m}\}$
  and $\mathcal{P}_{\hat{B}, c} = \{k \mid b_k = c, 1 \leq k \leq \hat{n}\}$.
  Then, given a rectangle $R$,
  finding the maximal sub-rectangle $R_c$ for character $c$
  reduces to two predecessor and two successor
  queries on $\mathcal{P}_{\hat{A},c}$ and $\mathcal{P}_{\hat{B},c}$.
  We use two tables of size $\sigma \times \hat{m}$ each,
  which answer predecessor/successor queries on $\hat{A}$ in $O(1)$ time.
  Similarly, we use two tables of size $\sigma \times \hat{n}$ each,
  which answer predecessor/successor queries on $\hat{B}$ in $O(1)$ time.
  Such tables can easily be constructed in $O(\sigma (\hat{m}+\hat{n}))$ time
  and occupy $O(\sigma (\hat{m}+\hat{n}))$ space.
  Notice that for any position $i$ in $\hat{A}$
  there exists a matching position pair $(i, k)$
  for some position $k$ in $\hat{B}$, and vice versa.
  Therefore, we have $\max\{\hat{m}, \hat{n}\} \leq M$.  
  Since $\sigma \leq \min\{\hat{m}, \hat{n}\} \leq \max\{\hat{m}, \hat{n}\}$,
  we have $\sigma (\hat{m} +\hat{n}) = O(M^2)$.
  Hence the data structure occupies $O(M^2)$ space
  and can be constructed in $O(M^2)$ time.

  \noindent \textbf{Data structure for checking already processed rectangles:}
  To construct a space-efficient data structure
  for checking if a given rectangle is already processed or not,
  we here associate each character $\hat{A}$ and $\hat{B}$
  with the following character counts:
  For any position $i$ in $\hat{A}$,
  let $\cnt_{\hat{A}}(i) = |\{i' \mid \hat{a}_{i'} = \hat{a}_{i}, 1 \leq i' \leq i\}|$ and for any position $k$ in $\hat{B}$,
  let 
  $\cnt_{\hat{B}}(k) = |\{k' \mid \hat{B}_{k'} = \hat{B}_{k}, 1 \leq k' \leq k\}|$.
  For each character $c \in [1, \sigma]$,
  let $M_c$ denotes the number of matching position pairs between
  $\hat{A}$ and $\hat{B}$ for character $c$.
  We maintain the following table $T_c$ of size $M_c \times M_c$:
  For any two matching positions pairs $(i, k)$ and $(j, \ell)$
  for character $c$ (namely, $\hat{a}_{i} = \hat{b}_{k} = \hat{a}_{j} = \hat{b}_{\ell} = c$),
  we set $T_c[\cnt_{\hat{A}}(i), \cnt_{\hat{B}}(k), \cnt_{\hat{A}}(j), \cnt_{\hat{A}}(\ell)] = 0$ if the corresponding rectangle $\langle (i, k), (j, \ell) \rangle$ is non-processed,
  and set $T_c[\cnt_{\hat{A}}(i), \cnt_{\hat{B}}(k), \cnt_{\hat{A}}(j), \cnt_{\hat{A}}(\ell)] = 1$ if the corresponding rectangle is processed.
  Clearly, this table tells us whether a given rectangle is processed or not
  in $O(1)$ time.
  The total size for these tables is $\sum_{c \in [1, \sigma]}M_c^2 = O(M^2)$. 

  We are now ready to show the complexity of our recursive algorithm.
 
  \noindent \textbf{Main routine:}
  A unique visit to a non-processed rectangle can be charged to itself.
  On the other hand, each distinct visit to a processed rectangle $R$
  can be charged to the corresponding rectangle which contains $R$
  as one of its maximal sub-rectangles.
  Since we have $O(M^2)$ rectangles,
  the total number of visits of the first type is $O(M^2)$.
  Also, since we visit at most $\sigma$ maximal sub-rectangles for 
  each of the $M^2$ rectangles,
  the total number of visits of the second type is $O(\sigma M^2)$.
  Using the two data structures described above,
  we can find each maximal sub-rectangle in $O(1)$ time
  and can check if it is already processed or not in $O(1)$ time.
  For each rectangle after recursion, 
  it takes $O(\sigma)$ time to calculate the maximum nesting depth from
  all of its maximal sub-rectangles.
  Thus, the main routine of our algorithm takes a total of $O(\sigma M^2)$ time.
  
  Overall, our algorithm takes $O(\sigma M^2+ n)$ time and
  uses $O(M^2 + n)$ space.
\end{proof}


\section{Conclusions and further work}

In this paper, we studied the problem of finding
a longest common palindromic subsequence of two given strings,
which is called the $2$-LCPS problem.
We proposed a new algorithm which solves
the $2$-LCPS problem in $O(\sigma M^2 + n)$ time and $O(M^2 + n)$ space,
where $n$ denotes the length of two given strings $A$ and $B$,
$M$ denotes the number of matching position pairs of $A$ and $B$,
and $\sigma$ denotes the number of distinct characters occurring
in both $A$ and $B$.

Since the $2$-LCPS problem is at least as hard as
the well-studied $4$-LCS problem,
and since any known solution to the $4$-LCS problem takes
at least $O(n^4)$ time in the worst case,
it seems a big challenge to solve the $2$-LCPS problem
in $O(M^{2-\lambda})$ or $O(n^{4-\lambda})$ time for any constant $\lambda > 0$.
This view is supported by the recent result on a conditional lowerbound
for the $k$-LCS problem:
If there exists a constant $\lambda > 0$
and an integer $k \geq 2$ such that the $k$-LCS problem over
an alphabet of size $O(k)$ can be solved in $O(n^{k-\lambda})$ time,
then the famous SETH (strong exponential time hypothesis)
fails~\cite{AbboudBW15}.

We also remark that our method should have a good expected performance.
Consider two random strings $A$ and $B$ of length $n$ each
over an alphabet of size $\sigma$.
Since roughly every $\sigma$-th character matches between $A$ and $B$,
we have $M = O(n^2 / \sigma)$.
Hence our method runs in $O(\sigma M^2 + n) = O(n^4 / \sigma)$ expected time.
On the other hand, 
the conventional dynamic programming algorithm
of Chowdhury et al.~\cite{ChowdhuryHIR14} takes $\Theta(n^4)$ time
for \emph{any} input strings of length $n$ each.
Thus, our method achieves a $\sigma$-factor speed-up in expectation.

As an open problem, we are interested in whether
the space requirement of our algorithms can be reduced,
as this could be of practical importance.


\bibliographystyle{abbrv}
\bibliography{ref}

\begin{thebibliography}{10}

\bibitem{AbboudBW15}
A.~Abboud, A.~Backurs, and V.~V. Williams.
\newblock Tight hardness results for {LCS} and other sequence similarity
  measures.
\newblock In {\em FOCS 2015}, pages 59--78, 2015.

\bibitem{Artazarov70}
V.~Arlazarov, E.~Dinic, M.~Kronrod, and I.~Faradzev.
\newblock On economical construction of the transitive closure of a directed
  graph.
\newblock {\em Soviet Math. Dokl.}, 11:1209–--1210, 1970.

\bibitem{Arslan07}
A.~N. Arslan.
\newblock Regular expression constrained sequence alignment.
\newblock {\em J. Discrete Algorithms}, 5(4):647--661, 2007.

\bibitem{ChinSFHK04}
F.~Y.~L. Chin, A.~D. Santis, A.~L. Ferrara, N.~L. Ho, and S.~K. Kim.
\newblock A simple algorithm for the constrained sequence problems.
\newblock {\em Inf. Process. Lett.}, 90(4):175--179, 2004.

\bibitem{ChowdhuryHIR14}
S.~R. Chowdhury, M.~M. Hasan, S.~Iqbal, and M.~S. Rahman.
\newblock Computing a longest common palindromic subsequence.
\newblock {\em Fundam. Inform.}, 129(4):329--340, 2014.

\bibitem{Deorowicz12}
S.~Deorowicz.
\newblock Quadratic-time algorithm for a string constrained {LCS} problem.
\newblock {\em Inf. Process. Lett.}, 112(11):423--426, 2012.

\bibitem{FarhanaR12}
E.~Farhana and M.~S. Rahman.
\newblock Doubly-constrained {LCS} and hybrid-constrained {LCS} problems
  revisited.
\newblock {\em Inf. Process. Lett.}, 112(13):562--565, 2012.

\bibitem{FarhanaR15}
E.~Farhana and M.~S. Rahman.
\newblock Constrained sequence analysis algorithms in computational biology.
\newblock {\em Inf. Sci.}, 295:247--257, 2015.

\bibitem{Grabowski16}
S.~Grabowski.
\newblock New tabulation and sparse dynamic programming based techniques for
  sequence similarity problems.
\newblock {\em Discrete Applied Mathematics}, 212:96--103, 2016.

\bibitem{HakataI92}
K.~Hakata and H.~Imai.
\newblock The longest common subsequence problem for small alphabet size
  between many strings.
\newblock In {\em ISAAC 1992}, pages 469--478, 1992.

\bibitem{HsuD84}
W.~J. Hsu and M.~W. Du.
\newblock Computing a longest common subsequence for a set of strings.
\newblock {\em {BIT}}, 24(1):45--59, 1984.

\bibitem{IliopoulosR08a}
C.~S. Iliopoulos and M.~S. Rahman.
\newblock New efficient algorithms for the {LCS} and constrained {LCS}
  problems.
\newblock {\em Inf. Process. Lett.}, 106(1):13--18, 2008.

\bibitem{IrvingF92}
R.~W. Irving and C.~Fraser.
\newblock Two algorithms for the longest common subsequence of three (or more)
  strings.
\newblock In {\em CPM 1992}, pages 214--229, 1992.

\bibitem{Itoga81}
S.~Y. Itoga.
\newblock The string merging problem.
\newblock {\em {BIT}}, 21(1):20--30, 1981.

\bibitem{KucherovPZ11}
G.~Kucherov, T.~Pinhas, and M.~Ziv{-}Ukelson.
\newblock Regular language constrained sequence alignment revisited.
\newblock {\em Journal of Computational Biology}, 18(5):771--781, 2011.

\bibitem{Maier78}
D.~Maier.
\newblock The complexity of some problems on subsequences and supersequences.
\newblock {\em J. ACM}, 25(2):322--336, 1978.

\bibitem{MasekP80}
W.~J. Masek and M.~Paterson.
\newblock A faster algorithm computing string edit distances.
\newblock {\em J. Comput. Syst. Sci.}, 20(1):18--31, 1980.

\bibitem{WagnerF74}
R.~A. Wagner and M.~J. Fischer.
\newblock The string-to-string correction problem.
\newblock {\em J. ACM}, 21(1):168--173, 1974.

\bibitem{WangKS11}
Q.~Wang, D.~Korkin, and Y.~Shang.
\newblock A fast multiple longest common subsequence {(MLCS)} algorithm.
\newblock {\em {IEEE} Trans. Knowl. Data Eng.}, 23(3):321--334, 2011.

\bibitem{ZhuW13}
D.~Zhu and X.~Wang.
\newblock A simple algorithm for solving for the generalized longest common
  subsequence {(LCS)} problem with a substring exclusion constraint.
\newblock {\em Algorithms}, 6(3):485--493, 2013.

\bibitem{ZhuWW16}
D.~Zhu, Y.~Wu, and X.~Wang.
\newblock An efficient algorithm for a new constrained {LCS} problem.
\newblock In {\em ACIIDS 2016}, pages 261--267, 2016.

\bibitem{ZhuWW16a}
D.~Zhu, Y.~Wu, and X.~Wang.
\newblock An efficient dynamic programming algorithm for {STR-IC-STR-EC-LCS}
  problem.
\newblock In {\em GPC 2016}, pages 3--17, 2016.

\end{thebibliography}

\end{document}